\documentclass[11pt]{article}
\usepackage{amsmath}
\usepackage{amsfonts}
\IfFileExists{pdfsync.sty}{\usepackage{pdfsync}}{}
\usepackage[cyr]{aeguill}
\usepackage{multirow} 
\usepackage{versions}
  \includeversion{ignoreabstract}
  \includeversion{ignoreintro}
\includeversion{related}
  \includeversion{gametheory}
 \excludeversion{extensions}

\newcommand\TODOBY[2]{\centerline{\textbf{ #1 A FAIRE PAR #2 \\}}}

\newcommand\vectorl[1]{{\mathbf#1}}

\newcommand\vx{\vectorl{x}}
\newcommand\vy{\vectorl{y}}

\newcommand\N{\mathbb{N}}
\newcommand\Z{\mathbb{Z}}
\newcommand\II{{R}}
\newcommand\I{{I}}

\newtheorem{definition}{Definition}

\newtheorem{theorem}{Theorem}

\newtheorem{proposition}{Proposition}
\newenvironment{proof}{{\bf Proof}:}{\hfill $\Box$ }

 \newcommand\motnouv[1]{\emph{#1}}


\begin{document}

\title{Asymetric Pavlovian Populations\thanks{This work and all
    authors were partly supported by ANR Project SHAMAN, Xavier
    Koegler was supported by the ANR projects ALADDIN and PROSE}}

\author{Olivier Bournez$^1$, J\'er\'emie Chalopin$^2$, Johanne Cohen$^3$,\\   Xavier Koegler$^4$, Mika\"el Rabie$^5$ \\
{\small (1) Ecole Polytechnique \& Laboratoire d'Informatique (LIX), }\\  {\small 91128 Palaiseau Cedex, France. }\\
{\small email:Olivier.Bournez@lix.polytechnique.fr }\\
{\small(2) Laboratoire d'Informatique Fondamentale de Marseille, }\\ 
{\small CNRS \& Aix-Marseille Universit{\'e}, }\\ 
{\small  39 rue Joliot Curie, 13453 Marseille Cedex 13, France.}\\
{\small email:Jeremie.Chalopin@lif.univ-mrs.fr}\\
{\small (3) {PRiSM \& CNRS}, Univiersité de Versailles} \\ 
{\small 45 Avenue des Etats Unis, 78000 Versailles, France. }\\
{\small email :Johanne.Cohen@prism.uvsq.fr }\\
{\small (4) LIAFA \& Universit\'e Paris Diderot - Paris 7,}\\   
{\small 75205 Paris Cedex  13, France.} \\ 
{\small email :Xavier.Koegler@liafa.jussieu.fr }\\
{\small (5) ENS de Lyon \& Laboratoire d'Informatique (LIX),} \\ 
{\small 91128 Palaiseau
Cedex, France. }\\ 
{\small
email:Mikael.Rabie@lix.polytechnique.fr}}

\date{\today}

\maketitle

\begin{abstract}
\begin{ignoreabstract}
  Population protocols have been introduced by Angluin et {al.} as a
  model of networks consisting of very limited mobile agents that
  interact in pairs but with no control over their own movement. A
  collection of anonymous agents, modeled by finite automata, interact
  pairwise according to some rules that update their
  states.  
 Predicates on the initial configurations that can be computed by such
 protocols have been characterized as semi-linear predicates.  

   In an orthogonal way, several distributed systems have been termed
   in literature as being realizations of games in the sense of game
   theory.

  We investigate under which conditions population protocols, or more
  generally pairwise interaction rules, correspond to games.

  We show  that restricting to asymetric games is not
  really a
  restriction: all  predicates computable by
  protocols can
  actually be computed by protocols corresponding to games, i.e. any
  semi-linear predicate can be computed by a Pavlovian population
  multi-protocol.


\end{ignoreabstract}
\end{abstract}

\vspace{-1cm}


\section{Introduction}
\vspace{-1ex}

\begin{ignoreintro}

   The computational power of networks of anonymous resource-limited
  mobile agents has been investigated recently.  
  Angluin et al.\ proposed in \cite{AspnesADFP2004} the model of
  \motnouv{population protocols} where finitely many finite-state
  agents interact in pairs chosen by an adversary. Each interaction
  has the effect of updating the state of the two agents according to
  a joint transition function.
  A protocol is said to \motnouv{(stably) compute} a predicate on the
  initial states of the agents if, in any fair execution, after
  finitely many interactions, all agents reach a common output that
  corresponds to the value of the predicate.

The model has been originally proposed to model computations realized by
sensor networks in which passive agents are carried along by other
entities. Variants of the original model considered so far include restriction
to one-way communications~\cite{angluin2007cpp}, restriction to
particular interaction graphs~\cite{AngluinACFJP2005}, random
interactions~\cite{AspnesADFP2004}, with ``speed'' \cite{BeauquierB10}. Various kinds of fault tolerance
have been considered for population
protocols~\cite{Delporte-GalletFGR06}, including the search for
self-stabilizing solutions~\cite{AngluinAFJ2005}. 
Solutions to classical problems of distributed algorithms
have also been considered in this model (see \cite{Newmodels}).


Most of the works so far on population protocols have concentrated on
characterizing which predicates on the initial states can be computed
in different variants of the model and under various assumptions
\cite{Newmodels}. In particular, the predicates computable by the
unrestricted population protocols from \cite{AspnesADFP2004} have been
characterized as being precisely the semi-linear predicates, that is
those predicates on counts of input agents definable in first-order
Presburger
arithmetic~\cite{AspnesADFP2004,AngluinAE2006semilinear}. 
%

 
In an orthogonal way, pairwise interactions between finite-state agents are sometimes motivated by the study of the dynamics of particular two-player games from game theory. For example, the work in \cite{Ref9deFMP04} considers the dynamics of the so-called \textit{PAVLOV} behavior in the iterated Prisoners' Dilemma. Several results about the time of convergence of this particular dynamics towards the stable state can be found in \cite{Ref9deFMP04}, and \cite{FMP06}, for rings, and complete graphs~\cite{ICS11} with having various classes of adversarial schedulers~\cite{abs-0812-1194}.

Our purpose is to better understand whether and when pairwise
interactions, and hence population protocols, can be considered as the
result of a game. We prove the   result that restricting to
games is not really a restriction: all predicates computable by
protocols can actually be computed by protocols corresponding to
games, i.e. any semi-linear predicate can be computed by a Pavlovian
population multi-protocol.

%

 
In Section \ref{section:pp}, we    recall population
protocols. In Section \ref{section:gametheory}, we give some basics
from game theory. In Section \ref{sec:gamepp}, we discuss how a game
can be turned into a dynamics, and introduce the notion of {Pavlovian}
population. In Section \ref{sec:multiprotocols} we state our main
result: any
  semi-linear predicate can be computed by a Pavlovian population
  multi-protocol. Remaining sections correspond to its proof: we prove
  that threshold and modulo predicates can be computed respectively in
  Sections \ref{sec:counting} and \ref{sec:modulo}.

\end{ignoreintro}

\begin{related}
\medskip
\noindent\textbf{Related Works.}
As we already said, population protocols have been introduced in
\cite{AspnesADFP2004}, and proved to compute all semi-linear
predicates. They have been proved not to be able to compute more in
\cite{AngluinAE2006semilinear}. Various restrictions on the initial
model have been considered up to now.  An  survey
can be found in \cite{Newmodels}.


More generally, population protocols arise as soon as populations of
anonymous agents interact in pairs. Our original motivation was to
consider rules corresponding to two-player games, and population
protocols arose quite incidentally. The main advantage of the
\cite{AspnesADFP2004} settings is that it provides a clear
understanding of what is called a computation by the model. Many
distributed systems have been described as the result of games, but as
far as we know there has not been attempt to characterize what can be
computed by games in the spirit of this computational model.

In this paper, we turn two players games into dynamics over agents, by
considering \textit{PAVLOV} behavior. This is inspired by
\cite{Ref9deFMP04,FMP06,kraines1988psd} that consider the dynamics of
a particular set of rules termed the \textit{PAVLOV} behavior in the iterated
Prisoners' Dilemma. The \textit{PAVLOV} behavior is sometimes also termed
\textit{WIN-STAY, LOSE-SHIFT}
\cite{nowak1993sws,axelrod:1984:ec}. Notice, that we extended it from
two-strategies two-player games to $n$-strategies two-player games,
whereas above references only talk about two-strategies two-player
games, and mostly of the iterated Prisoners' Dilemma. This is clearly not
the only way to associate a dynamic to a game. Alternatives to \textit{PAVLOV} behavior could include \textit{MYOPIC} dynamics (at each step each player chooses the best response to previously played strategy by its adversary), or the well-known and studied \textit{FICTIOUS-PLAYER} dynamics (at each step each player chooses the best response to the statistics of the past history of strategies played by its adversary). We refer to \cite{theorylearninggames,LivreBinmore} for a presentation of results known about the properties of the obtained dynamics according to the properties of the underlying game. This is clearly non-exhaustive, and we refer to \cite{axelrod:1984:ec} for a zoology of possible behaviors for the particular iterated Prisoners' Dilemma game, with discussions of their compared merits.

Recently Jaggard et al. \cite{ICS11} studied a  distributed model
similar to protocol populations where the interactions between pairs
of agents
correspond to a game. Unlike in our model, each agent has there its own pay-off
matrix and has some knowledge of the history. This work gives several non-convergence results.

In this paper we consider possibly asymmetric games. In a
recent paper \cite{CSP08} we discussed population protocols
corresponding to Pavlovian strategies obtained from \emph{symmetric}
games and we gave some protocols to compute some basic
predicates. Unlike what we obtain here, where we prove that any
computable predicate is computable by a asymmetric Pavlovian population
protocol, restricting to symmetric games
seems a (too) strong restriction and most predicates (e.g. counting up
to $5$, to check where $x = 0 \mod 2$) seems not even
computable.
\end{related}

\vspace{-1ex}
\section{Population Protocols}
\label{section:pp}
\vspace{-1ex}

A protocol \cite{AspnesADFP2004} is given by $(Q,\Sigma,\iota,\omega,\delta)$ with the
following components. $Q$ is a finite set of \motnouv{states}.
$\Sigma$ is a finite set of \motnouv{input symbols}.  $\iota: \Sigma
\to Q$ is the initial state mapping, and $\omega: Q \to \{0,1\}$ is
the individual output function. $\delta \subseteq Q^4$ is a joint
transition relation that describes how pairs of agents can
interact. Relation $\delta$ is sometimes described by listing all
possible interactions using the notation $(q_1,q_2) \to (q'_1,q'_2)$,
or even the notation $q_1q_2 \to q'_1 q'_2$, 
for $(q_1,q_2,q'_1,q'_2) \in \delta$ (with the convention that
$(q_1,q_2) \to (q_1,q_2)$ when no rule is specified with $(q_1,q_2)$
in the left-hand side). The protocol is termed \motnouv{deterministic}
if for all pairs $(q_1,q_2)$ there is only one pair $(q'_1,q'_2)$ with
$(q_1,q_2) \to (q'_1,q'_2)$. In that case, we write
$\delta_1(q_1,q_2)$ for the unique $q'_1$ and $\delta_2(q_1,q_2)$ for
the unique $q'_2$.


Computations of a protocol proceed in the following way. The
computation takes place among $n$ \motnouv{agents}, where $n \ge 2$. A
\motnouv{configuration} of the system can be described by a vector of
all the agents' states. The state of each agent is an element of $Q$. Because agents
with the same states are indistinguishable, each configuration can be
summarized as an unordered multiset of states, and hence of elements
of $Q$.
Each agent is given initially some input value from $\Sigma$: Each
agent's initial state is determined by applying $\iota$ to its input
value. This determines the initial configuration of the population.

An execution of a protocol proceeds from the initial configuration by
interactions between pairs of agents. Suppose that two agents in state
$q_1$ and $q_2$ meet and have an interaction. They can change into
state $q'_1$ and $q'_2$ if $(q_1,q_2,q'_1,q'_2)$ is in the transition
relation $\delta$.  If $C$ and $C'$ are two configurations, we write
$C \to C'$ if $C'$ can be obtained from $C$ by a single interaction of
two agents: this means that $C$ contains two states $q_1$ and $q_2$
and $C'$ is obtained by replacing $q_1$ and $q_2$ by $q'_1$ and $q'_2$
in $C$, where $(q_1,q_2,q'_1,q'_2) \in \delta$. An \motnouv{execution}
of the protocol is an infinite sequence of configurations
$C_0,C_1,C_2,\cdots$, where $C_0$ is an initial configuration and $C_i
\to C_{i+1}$ for all $i\ge0$. An execution is \motnouv{fair} if for
every configuration $C$ that appears infinitely often in the execution,
if $C \to C'$ for some configuration $C'$, then $C'$ appears
infinitely often in the execution. As proved
in~\cite{AngluinAE2006semilinear}, the fairness condition implies that
any global configuration that is infinitely often reachable is
eventually reached. 

At any point during an execution, each agent's state determines its
output at that time. If the agent is in state $q$, its output value is
$\omega(q)$. The configuration output is $0$ (resp. $1$) if all
the individual outputs are $0$ (resp. $1$). If the individual
outputs are mixed $0$s and $1s$ then the output of the configuration
is undefined. 

Let $p$ be a predicate over multisets of elements of
$\Sigma$. Predicate $p$ can be considered as a function whose   range is
$\{0,1\}$ and whose domain is the collection of these multisets. The predicate is said to be computed by the protocol if,  for every  multiset $\I$, and
every fair execution that starts from the initial configuration
corresponding to $\I$, the output value of every agent eventually
stabilizes to $p(\I)$. Predicates can also be considered as  functions whose range is $\{0,1\}$ and whose domain is $\mathbb{N}^{|\Sigma|}$. 
The following is then known.

\vspace{-0.1cm}

\begin{theorem}[\cite{AspnesADFP2004,AngluinAE2006semilinear}] A
  predicate is computable in the population protocol model if and only
  if it is semilinear.
\end{theorem}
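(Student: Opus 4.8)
The plan is to prove the two implications separately, treating the easy ``if'' direction first and then the genuinely hard ``only if'' direction.

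For the \emph{if} direction (every semilinear predicate is computable) I would start from the classical normal form: every semilinear subset of $\N^{|\Sigma|}$ is a finite boolean combination of \emph{threshold} sets $\{\vx : \sum_i a_i x_i \ge c\}$ and \emph{modular} sets $\{\vx : \sum_i a_i x_i \equiv c \pmod m\}$, with integer coefficients. It then suffices to (i) exhibit protocols stably computing each such atomic predicate, and (ii) show the class of stably computable predicates is closed under negation, conjunction and disjunction. Negation is immediate by postcomposing the output map $\omega$ with the boolean complement. For conjunction and disjunction I would run the two protocols in parallel on the product state space $Q_1 \times Q_2$, letting the two coordinates evolve independently under their respective rules and combining the individual outputs through the appropriate gate; since fairness of an execution of the product protocol projects to fairness on each factor, each coordinate stabilizes and hence so does the combined output. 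The atomic threshold and modular protocols themselves are built by having agents carry partial sums or residues and propagate a leader-style ``current verdict'' through the population --- which is precisely the content of the counting and modulo sections that make up the remainder of this paper.

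For the \emph{only if} direction (every stably computable predicate is semilinear) --- the hard part --- I would work at the level of configurations viewed as vectors in $\N^{|Q|}$, so that the single-step relation $\to$ becomes a finite set of translations of $\N^{|Q|}$ and the reachability relation $\to^*$ is its reflexive--transitive closure. The input sets on which the protocol stably outputs $0$ and $1$ partition $\N^{|\Sigma|}$, and the goal is to show each is semilinear. Two structural facts drive the argument: (a) reachability is \emph{monotone and pumpable}, because any pair of agents may interact, so adding further agents in suitable states cannot destroy an existing reachability, which lets one invoke Dickson's lemma / the well-quasi-ordering of the componentwise order on $\N^{|Q|}$; and (b) a configuration stably outputs $b$ exactly when every configuration reachable from it can in turn reach some configuration that is output-$b$ stable. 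Combining (a) and (b), one shows the set of configurations stably outputting $b$ is invariant under the relevant pumping operations, and then projects this set back to inputs through the initialization map $\iota$ to obtain semilinearity.

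The main obstacle is step (b) and its interaction with (a): converting the dynamical, quantifier-alternating condition ``for every reachable configuration there exists a further reachable stable one'' into a single first-order Presburger formula. The crux is a pumping/truncation argument establishing that stable output is determined by \emph{bounded} information --- that beyond some threshold the exact multiplicities of states cease to matter --- so that the $\forall\exists$ quantification over an a priori infinite reachability tree collapses to a finite, Presburger-definable condition. This is where commutativity of the model and the well-quasi-order on $\N^{|Q|}$ do the real work, and it is the step I expect to require by far the most care; the remaining bookkeeping (closure of semilinear sets under projection and boolean operations) is routine by comparison.
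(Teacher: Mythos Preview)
The paper does not prove this theorem at all: it is stated with attribution to \cite{AspnesADFP2004,AngluinAE2006semilinear} as a known result (``The following is then known''), and no argument is given. There is therefore nothing in the paper to compare your proposal against.

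That said, two remarks on the proposal itself. First, your outline of both directions is broadly faithful to the original Angluin--Aspnes--Eisenstat argument: the ``if'' direction via boolean closure plus atomic threshold/modulo protocols, and the ``only if'' direction via monotonicity of reachability, Dickson's lemma, and a pumping/truncation argument to collapse the $\forall\exists$ stability condition to something Presburger-definable. You are right that the last step is where the real work lies, and your sketch there is honest about being only a plan.

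Second, you misidentify the role of Sections~\ref{sec:counting} and~\ref{sec:modulo}. Those sections do \emph{not} supply the atomic protocols needed for Theorem~1; they are the technical heart of the paper's own result, Theorem~\ref{th:main}, and their whole point is to build threshold and modulo protocols under the much more restrictive \emph{Pavlovian} constraint (Proposition~\ref{prop-sets}). For the unrestricted model of Theorem~1 the atomic protocols are far simpler --- essentially the original constructions of \cite{AspnesADFP2004} --- and the elaborate state bookkeeping you see later in this paper would be unnecessary. So do not cite those sections as evidence for the ``if'' direction of Theorem~1; they solve a harder, different problem.
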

\vspace{-0.1cm}

Recall that semilinear sets are exactly the sets that are definable in
first-order Presburger arithmetic \cite{presburger:uvk}.

\vspace{-1ex}
\section{Game Theory}
\label{section:gametheory}
\vspace{-1ex}

\begin{gametheory}

We now recall the simplest concepts from Game Theory. We focus on non-cooperative games, with complete information, in normal form. 

The simplest game is made up of two players, called $\I$ (or
\emph{initiator}) and $\II$ (or \emph{responder}), with a finite set
of actions, called \emph{pure strategies}, $Strat(\I)$ and
$Strat(\II)$. Denote by $A_{i,j}$ (resp. $B_{i,j}$) the score
for player $\I$
(resp. $\II$) when $\I$ uses strategy $i \in Strat(\I)$ and $\II$ uses strategy
$j \in Strat(\II)$.
The scores are given by $n \times m$ matrices $A$ and $B$, where $n$ and
$m$ are the cardinality of $Strat(\I)$ and $Strat(\II)$.

A strategy $x$ in   $Strat(\I)$ is said to be a best response to strategy
$y$ in $Strat(\II)$, denoted by $x \in BR_A(y)$ if
 $A_{z,y} \le A_{x,y}$
 for all strategies $z \in Strat(\I)$. Conversely, a strategy $y \in
 Strat(\II)$ satisfies 
$y \in BR_B(x)$ if
 $B_{x,z} \le B_{x,y}$
 for all strategies $z \in Strat(\II)$. 
%
%
A pair $(x,y)$ is a \emph{(pure) Nash equilibrium} if $x \in
BR_A(y)$ and $y \in BR_B(x)$. %
%
%
In other words, two strategies $(x,y)$ form a Nash equilibrium if
in that state neither of the players has a unilateral interest to deviate
from it.

There are two main approaches to discuss dynamics of games. The first
consists in repeating games \cite{LivreBinmore}. The second in using models from
evolutionary game theory. Refer to \cite{Evolutionary1,LivreWeibull}
for a presentation of this latter approach.

Repeating $k$ times a game, is equivalent to extending the space of
actions into $Strat(\I)^k$ and $Strat(\II)^k$: player $\I$ (respectively
$\II$) chooses his or her action $\vx(t) \in Strat(\I)$, (resp. $\vy(t) \in
Strat(\II)$) at time $t$ for $t=1,2,\cdots,k$. This is
equivalent to a two-player game with respectively $n^k$ and $m^k$
choices for players.
%
%

In practice, player $I$ (respectively $\II$) has to solve the following
problem at each time $t$: given the history of the game up to now,
that is to say $X_{t-1}=\vx(1),\cdots,\vx(t-1)$ and
$Y_{t-1}=\vy(1),\cdots,\vy(t-1)$ what should I (resp. \II) play at time $t$? In
other words, how to choose $\vx(t) \in Strat(\I)$?  (resp. $\vy(t) \in
Strat(\II)$?)

Is is natural to suppose that this is
given by some behavior rules: $\vx(t)=f(X_{t-1},Y_{t-1})$ and $\vy(t)=g(X_{t-1},Y_{t-1})$ for some particular functions $f$ and $g$.

The question of the best behavior rule to use in games, in particular for the Prisoners' Dilemma
gave birth to an important literature. In particular, after the book
\cite{axelrod:1984:ec}, that describes the results of tournaments of
behavior rules for the iterated Prisoners' Dilemma, and that argues that
there exists a best behavior rule called $TIT-FOR-TAT$.
This  consists in cooperating at the first step,
and then do the same thing as the adversary at subsequent times. %
%
A lot of other behaviors, most of them with very
picturesque names have been proposed and studied: see for example
\cite{axelrod:1984:ec}.

Among possible behaviors there is \textit{PAVLOV} behavior: in the iterated
Prisoners' Dilemma, a player cooperates if and only if both players opted
for the same alternative in the previous move. This name 
\cite{axelrod:1984:ec,kraines1988psd,nowak1993sws} stems from the fact that this strategy embodies
an almost reflex-like response to the payoff: it repeats its former
move if it was rewarded  above a threshold value, but switches behavior if
it was punished by receiving under this value. Refer to
\cite{nowak1993sws} for some study of this strategy in the spirit of
Axelrod's tournaments.
The \textit{PAVLOV} behavior can also be termed \textit{WIN-STAY, LOSE-SHIFT} since
if the play on the previous round results in a success, then the
agent plays the same strategy on the next round. Alternatively, if the
play resulted in a failure the agent switches to another action
\cite{axelrod:1984:ec,nowak1993sws}.

\end{gametheory}

\vspace{-1ex}
\section{From Games To Population Protocols}
\label{sec:gamepp}
\vspace{-1ex}

In the spirit of the previous discussion, to any 
game, we can
associate a population protocol as follows, corresponding to 
a \textit{PAVLOV}(ian) behaviour:

\begin{definition}[Associating a Protocol to a Game] \label{def:1}
  Assume a (possibly asymmetric) two-player game is given. Let $A$ and
  $B$ be the corresponding matrices. Let $\Delta$ be
  some threshold. 
  
  The  protocol associated to the game  is a population
  protocol whose   set of states is $Q$, where  $Q=Strat(\I)=Strat(\II)$
  is the set of strategies of the game, and whose transition rules
  $\delta$ are
  given as follows: $(q_1,q_2,q'_1,q'_2) \in \delta$ where

\hspace{-0.8cm}
\begin{tabular}[h]{ccc}
  \begin{minipage}[h]{0.5\linewidth}
     \begin{itemize}
  \item $q'_1=q_1$ when $A_{q_1,q_2}\ge\Delta$,
  \item $q'_1 \in BR_A(q_2)$ when $A_{q_1,q_2} < \Delta$,
  \end{itemize}
  \end{minipage} &  &
\begin{minipage}[h]{0.5\linewidth}
     \begin{itemize}
  \item $q'_2=q_2$ when $B_{q_2,q_1} \ge \Delta$,
  \item $q'_2 \in BR_B(q_1)$ when $B_{q_2,q_1} < \Delta$.
  \end{itemize}
  \end{minipage}
\end{tabular}

\end{definition}
\vspace{-0.5cm}
\begin{definition}[Pavlovian Population Protocol]
A population protocol is \motnouv{Pavlovian} if it can be obtained
from a game as above
.
\end{definition}



A population protocol obtained from a game as above  will be termed \motnouv{deterministic}
if best responses are assumed to be unique; in this case, the
rules are deterministic: for all 
$q_1,q_2$, there is a unique $q'_1$ and a unique $q'_2$ such that
$(q_1,q_2,q'_1,q'_2) \in \delta$.



In order to avoid to talk about matrices, we start by stating some
structural properties of Pavlovian population protocols. 

\begin{proposition}\label{prop-sets} 
Consider a set of rules.  For all rules $ab \to a'b'$, we denote 
$\delta^I_a(b)=b'$ and $\delta^R_b(a)=a'$.
Let $Stable ^I(a) =\{x \in Q  | \delta^I_a(x)=x\},$
and 
 $Stable ^R(a) =\{x \in Q  | \delta^R_a(x)=x\}.$

Then the set of rules is deterministic Pavlovian iff $\forall a \in Q~ \exists
~max^I(a) \in Stable^I(a)$ and $\exists~ max^R(a) \in Stable^R(a)$ such that for all states $a$, 
\begin{enumerate}
\item $\forall b \not\in Stable^I(a)$ implies
  $\delta^I _a(b)=max^I (a)$. 
\item $\forall b \not\in Stable^R(a)$ implies
  $\delta^R _a(b)=max^R(a)$. 
\end{enumerate}
\end{proposition}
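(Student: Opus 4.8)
The plan is to prove both directions of the equivalence by analysing, for each fixed state $a$, the two single-argument update maps $\delta^I_a$ and $\delta^R_a$ separately. The essential observation is that in Definition \ref{def:1} the responder's new state depends only on the matrix $B$ and the threshold $\Delta$, while the initiator's new state depends only on $A$ and $\Delta$; consequently the map $\delta^I_a$ (governed by $B$) and the map $\delta^R_a$ (governed by $A$) can be studied — and, in the converse, constructed — completely independently. I will treat the claim about $\delta^I$ and $max^I(a)$ in detail; the claim about $\delta^R$ and $max^R(a)$ follows by the symmetric argument with $A$ in place of $B$.

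For the forward implication, assume the rules come from a game $(A,B,\Delta)$ with unique best responses. Fix $a$ and set $max^I(a) := BR_B(a)$, the unique responder best response to initiator $a$. First I check $max^I(a) \in Stable^I(a)$: since $max^I(a)$ attains the maximal $B$-value against $a$, if that value is $\ge \Delta$ the first clause of Definition \ref{def:1} keeps it fixed, and if it is $< \Delta$ the second clause sends it to $BR_B(a) = max^I(a)$; either way it is a fixed point. Next, take any $b \notin Stable^I(a)$, i.e. $\delta^I_a(b) \neq b$. Were the $B$-value of $b$ against $a$ at least $\Delta$, the first clause would force the responder to stay, a contradiction; hence that value is $< \Delta$ and the second clause gives $\delta^I_a(b) = BR_B(a) = max^I(a)$. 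This is exactly item~1, and the analogous reasoning on $A$ yields item~2 with $max^R(a) := BR_A(a)$.

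For the converse, I construct matrices $A$, $B$ and a threshold realizing a given rule set that satisfies items~1 and~2. Put $\Delta := 1$. Define $B$ entrywise by assigning, against each fixed initiator $a$, the value $2$ to the state $max^I(a)$, the value $1$ to every other state of $Stable^I(a)$, and the value $0$ to every state outside $Stable^I(a)$. Because the entry of $max^I(a)$ strictly dominates every other entry that is relevant when the initiator plays $a$, the responder best response to $a$ is unique and equals $max^I(a)$, so the induced protocol is deterministic. One then checks that the Pavlovian rule reproduces the given $\delta^I_a$: every $b \in Stable^I(a)$ has value $\ge 1 = \Delta$ and therefore stays (matching $\delta^I_a(b) = b$), while every $b \notin Stable^I(a)$ has value $0 < \Delta$ and is sent to $BR_B(a) = max^I(a)$, which is precisely item~1. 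Defining $A$ symmetrically from $Stable^R$ and $max^R$ reproduces all the $\delta^R_a$, and since $A$ and $B$ are independent the two families of rules are realized simultaneously.

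The construction raises no real difficulty once the bookkeeping is fixed; the only point requiring care — and the main obstacle — is the uniqueness of best responses needed for a \emph{deterministic} Pavlovian protocol. This is exactly what forces the collapse structure of items~1 and~2 in the forward direction (every non-fixed state must map to the single best response), and it is guaranteed in the converse by making the entry of each $max$ strategy strictly larger than all competing entries relevant to the same fixed opponent. A secondary subtlety is that each $Stable^I(a)$ must be nonempty for $max^I(a)$ to exist; this is automatic in the forward direction since $BR_B(a)$ is always a fixed point, and it is provided as a hypothesis in the converse.
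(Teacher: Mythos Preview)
Your proof is correct and follows essentially the same approach as the paper's: in the forward direction you set $max^I(a)=BR_B(a)$ and argue that any non-fixed responder must have sub-threshold payoff and hence jump to $BR_B(a)$; in the converse you build $B$ column by column so that $max^I(a)$ is the unique maximizer and the remaining $Stable^I(a)$ elements sit exactly at threshold. Two minor points worth noting: your forward argument is slightly more careful than the paper's in that you explicitly handle the edge case where even the best-response payoff might be $<\Delta$ (showing $BR_B(a)$ is still a fixed point via the second clause), and your converse construction is marginally cleaner since you avoid the paper's separate case split on whether $Stable^I(a)=Q$ --- your uniform assignment $(2,1,0)$ handles that case automatically.
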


\begin{proof} First, we consider a Pavlovian population protocol $P$
 obtained from corresponding matrices $A$ and $B$. Let $\Delta$ be the
 associated threshold.  Let $a$ be an arbitrary state in $Q$, and let
 $q$ be the best response to strategy $a$ for matrix $B$.

 Focus on the rule $aq \to a'q'$ where $(a',q') \in Q^2$, i.e., focus on the case where player $I$ plays $a$ while player $\II$ plays
 $q$. As $q=BR_B(a)$, we have, by Definition~\ref{def:1}, 
 $q'$ equals to $q$. Thus, $q \in Stable^I(a)$.

Now, let consider $b$ such that $b \notin Stable^I(a)$.  We focus on
the rule $ab \to a''b'$ where $(a'',b')\in Q^2$. So by definition of
set $Stable^I$, we have $b \neq b'$. Using Definition~\ref{def:1}, we
have $B_{b,a} < \Delta$ and $b'=BR_B(a)$. So $b'=BR_B(a)=q$. Thus, if
we let $max^I(a) = q$, $max^I(a)$ satisfies the conditions of the
proposition.

 Using similar arguments, we can also prove that $ \exists ~max^R(a) \in
 Stable^R(a)$ such that $\forall b \not\in Stable^R(a)$ implies
 $\delta^I _a(b)=max^R (a)$.  In fact, we can sum up the relationship
 between the game matrix and rules by the following: for any   $a\in Q$, we have 
  $Stable^I(a) =\{x \in Q  |B_{x,a}\geq
 \Delta\} \cup \{BR_B(a)\}$ and  $max^I(a)=BR_B(a)$ and 
 $Stable^R(a) =\{x \in Q  |A_{x,a}\geq
 \Delta\} \cup \{BR_A(a)\}$  and $max^R(a)=BR_A(a)$.

Conversely, consider a population protocol $P$ satisfying the
properties of the proposition. 
All rules $ab \to a'b'$ are such that $\delta^I_a(b)=b'$ and
$\delta^R_b(a)=a'$.  We focus on the construction on a two-player game
having the corresponding matrices $A$, and $B$. We fix an arbitrary
value $\Delta$ as the threshold of the corresponding game.

\begin{itemize}
\item  If   $Stable^I(a) \neq Q$,  then   $B_{max^I(a),a} =\Delta+1$.  If  $x \in Stable^I(a)$ and if $x \neq max^I(a)$ then $B_{x,a} =\Delta$.  If  $x \notin Stable^I(a)$, then   $B_{x,a} =\Delta-1$. 
\item  If   $Stable^I(a) = Q$,  then   $\forall x\in Q$, $B_{x,a} =\Delta$.
\item If   $Stable^R(a) \neq Q$,  then   $A_{max^R(a),a} =\Delta+1$.  If  $x \in Stable^R(a)$ and if $x \neq max^R(a)$ then $A_{x,a} =\Delta$.  If  $x \notin Stable^R(a)$, then   $A_{x,a} =\Delta-1$. 
\item  If   $Stable^R(a) = Q$,  then   $\forall x\in Q$, $A_{x,a} =\Delta$.
\end{itemize}
It is easy to see that this game describes all rules of $P$. So,  $P$ is a Pavlovian population Protocol.
\end{proof}

\vspace{-1ex}
\section{Main Result}\label{sec:multiprotocols}
\vspace{-1ex}

Inverting value of the individual output function, the class of
predicates computable by a Pavlovian population protocol is clearly closed under
negation. 
However, this is not clear that predicates computable by Pavlovian
population protocols are closed under conjunction or disjunction.

This is true if one considers \emph{multi-protocol}. The idea is to
consider $k$ (possibly asymmetric) two-player games. At each step,
each player chooses a strategy for each of the $k$ games. Now each of
the $k$ games is played independently when two agents meet.
Formally:

\begin{definition}[Multiprotocol]
Consider $k$ (possibly asymmetric) two-player games. 
For game $i$,  let $Q^i$ be the corresponding
states, $A^i$ and $B^i$ the corresponding matrices.

The associated  population protocol is the population protocol whose   set of states is $Q=Q^1\times Q^2\times\ldots\times Q^k$, and whose  transition rules are given as follows:
$((q^1_1,\ldots,q^k_1),(q^1_2,\ldots,q^k_2),({q^1_1}',\ldots,{q^k_1}'),({q^1_2}',\ldots,{q^k_1}'))\in\delta$
where, for all $1 \le i\le k$, $(q^i_1,q^i_2,{q^i_1}',{q^i_2}')$ is a
transition of the Pavlovian population protocol associated to the $i_{th}$ game.

\end{definition}

Notice that, when considering population protocols, a multi-protocol
  is a particular population protocol. This is the key property used
  in \cite{AspnesADFP2004} to prove that stably computable predicates are closed under
  boolean operations.
When considering Pavlovian games, one can build multi-protocols that are not 
Pavlovian protocols, and it is not clear whether one can always transform any
pavlovian multi-protocol into an equivalent pavlovian protocol.  

As explained before, multisets of elements of $\Sigma = (\sigma_1,
\ldots, \sigma_l)$ are in  bijection with elements of
$\mathbb{N}^l$, and can be represented by a vector $(x_1, \ldots,
x_l)$ of non-negative integers where $x_i$ is the number of occurrences
of $\sigma_i$ in the multiset.  Thus, we consider predicates $\psi$ over
vectors of non-negative integers. We write $[\psi]$ for their
characteristic functions. 
Recall that a predicate is semi-linear iff it is Presburger
definable \cite{presburger:uvk}. Semi-linear predicate correspond to
boolean combinations of threshold predicates and modulo predicates
defined as follows (variables $x_i$ represent the number of agents
initially in state $\sigma_i$): 
%
%
  A \emph{threshold} predicate is of the form $[\Sigma a_ix_i \geq k]$,
  where $\forall i, a_i \in \Z$, $k \in \Z$ and the $x_i$s are
  variables.
  A \emph{modulo} predicate is of the form $[\Sigma a_ix_i \equiv b \mod k
]$, where $\forall i, a_i \in \Z$, $k \in \N\setminus\{0,1\}$, $b \in
  [1,k-1]$ and the $x_i$s are variables.

We can then state our main result:

\begin{theorem} \label{th:main}
For any predicate $\psi$, the following conditions are equivalent:
\vspace{-0.1cm}
\begin{itemize}
\item  $\psi$ is computable by a population protocol
\item  $\psi$ is computable by a Pavlovian population multi-protocol
\item  $\psi$ is semi-linear.
\end{itemize}
\end{theorem}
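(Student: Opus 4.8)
The plan is to establish the three conditions as a cycle of implications, leaning on the cited characterization of \cite{AspnesADFP2004,AngluinAE2006semilinear} that a predicate is computable by a population protocol if and only if it is semi-linear. Concretely, I would prove that semi-linearity implies computability by a Pavlovian multi-protocol, that a Pavlovian multi-protocol is computable by an ordinary population protocol, and close the loop through the cited characterization.

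The easy arc is that any predicate computed by a Pavlovian population multi-protocol is computable by a population protocol. As already observed above, a multi-protocol is itself a particular population protocol (its state set is the product $Q^1 \times \cdots \times Q^k$ and its transition relation simply plays each of the $k$ games independently at each interaction), and a Pavlovian protocol is in particular a population protocol. Composing with the characterization of \cite{AspnesADFP2004,AngluinAE2006semilinear}, any such predicate is semi-linear, which in turn gives back computability by a population protocol; this settles two of the three arcs at once.

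The substantive direction is that every semi-linear predicate is computable by a Pavlovian multi-protocol, and I would split it into two stages. In the first (reduction) stage, I use that a semi-linear predicate is, by Presburger definability, a Boolean combination of threshold predicates $[\Sigma a_i x_i \ge k]$ and modulo predicates $[\Sigma a_i x_i \equiv b \bmod k]$, so it suffices to compute each atomic predicate by a Pavlovian protocol and then to establish closure under the Boolean connectives. Closure under negation is immediate, since inverting the individual output function $\omega$ leaves the underlying game, hence the Pavlovian character, untouched. For conjunction (and, via De~Morgan, disjunction) I invoke the multi-protocol construction: given Pavlovian protocols $P_1, P_2$ computing $\psi_1, \psi_2$, each agent carries a pair of states, both games are run in parallel, and the combined output function is set to $\omega_1 \wedge \omega_2$. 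Correctness follows exactly as in \cite{AspnesADFP2004}: the projection of any fair execution of the multi-protocol onto the $i$-th coordinate is a fair execution of $P_i$, so each coordinate stabilizes to $\psi_i(\I)$ and the combined output stabilizes to the required Boolean value.

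The second stage, and the main obstacle, is realizing the two atomic families by genuinely Pavlovian protocols, carried out in Sections~\ref{sec:counting} and \ref{sec:modulo}. The difficulty is that the Pavlovian condition is very rigid: by Proposition~\ref{prop-sets}, for each state $a$ every destabilizing interaction in the initiator role must send the partner to a single fixed target $max^I(a)$, and symmetrically for the responder role with $max^R(a)$, so one cannot prescribe transition rules freely. Designing counting and modulo protocols whose transition tables fit exactly this $Stable^I/Stable^R$ shape — while still correctly aggregating the weighted counts $\Sigma a_i x_i$ and stabilizing under every fair schedule — is where the real work lies, and it is precisely the asymmetry of the games (distinct matrices $A$ and $B$ for the initiator and responder) that supplies the freedom needed, in contrast to the symmetric setting of \cite{CSP08}. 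Once both atomic families are constructed, the Boolean closure of the first stage completes the implication and the cycle closes.
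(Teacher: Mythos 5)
Your proposal is correct and follows essentially the same route as the paper: the cited equivalence of population-protocol computability with semi-linearity, the observation that a (Pavlovian) multi-protocol is a particular population protocol, Boolean closure for Pavlovian multi-protocols via negation of $\omega$ and parallel composition of games (the paper's Proposition~\ref{prop:closure}), and the reduction of semi-linear predicates to threshold and modulo predicates realized by the Pavlovian constructions of Sections~\ref{sec:counting} and~\ref{sec:modulo}. The only difference is presentational (a cycle of implications rather than the paper's ``it suffices to compute threshold and modulo predicates''), with identical mathematical content.
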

\vspace{-0.2cm}

The proof of the following proposition can be found in Appendix
 \ref{proof:closure}. 

\begin{proposition} \label{prop:closure}
The class of predicates computable by multi-games  are closed under boolean operations. 
\end{proposition}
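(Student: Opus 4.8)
The plan is to reduce to the three operations $\neg$, $\wedge$, $\vee$, since these generate all boolean operations. Closure under negation is immediate and was already noted: given a multi-protocol computing $\psi$ with individual output function $\omega$, the same games equipped with the flipped output function $q \mapsto 1-\omega(q)$ compute $\neg\psi$, and this is still a multi-protocol. So the real content is closure under $\wedge$, the case of $\vee$ being symmetric. This mirrors the parallel-composition argument used in \cite{AspnesADFP2004}, the decisive point being that the parallel composition of two multi-protocols is again a multi-protocol.

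First I would set up the composition. Let $M_1$ compute $\psi_1$ and $M_2$ compute $\psi_2$, where $M_1$ is built from some list of games and $M_2$ from another, with state sets $Q_1,Q_2$, input mappings $\iota_1,\iota_2$, and individual output functions $\omega_1,\omega_2$. Form the multi-protocol $M$ on the concatenated list of games; its state set is the product $Q_1 \times Q_2$, an agent with input $\sigma$ starts in $(\iota_1(\sigma),\iota_2(\sigma))$, and I define its individual output by $\omega(p_1,p_2) = \omega_1(p_1)\wedge\omega_2(p_2)$. Because in a multi-protocol each game is played independently at every meeting, a single interaction of $M$ acts on the $Q_1$-part exactly as the corresponding interaction of $M_1$ and on the $Q_2$-part exactly as that of $M_2$. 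Consequently the two coordinate projections $\pi_1,\pi_2$ send every execution of $M$ to an execution of $M_1$, resp. $M_2$, and send the initial configuration of $M$ on input $\I$ to the initial configurations of $M_1$ and $M_2$ on the same input.

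The core of the argument is to show that $M$ stably computes $\psi_1\wedge\psi_2$. I would work with the standard reachability characterization of stable computation — from every configuration reachable from the initial one, some output-stable configuration carrying the correct output is reachable — which is equivalent to the fair-execution definition by the ``infinitely often reachable implies eventually reached'' property recalled in Section~\ref{section:pp}. So let $C$ be reachable from the initial configuration of $M$ on input $\I$. Projecting the witnessing path, $\pi_1(C)$ is reachable in $M_1$ from its initial configuration; since $M_1$ stably computes $\psi_1$, there is a path in $M_1$ from $\pi_1(C)$ to a configuration $D_1$ that is output-stable for $M_1$ with output $\psi_1(\I)$. I lift this path to $M$ by performing the same interactions on the same agents, which drives the $Q_1$-part to $D_1$ while dragging the $Q_2$-part to some configuration $E_2$, reaching a configuration $C'$ of $M$. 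Next, using that $M_2$ stably computes $\psi_2$, I drive the $Q_2$-part from $E_2$ to a configuration $D_2$ that is output-stable for $M_2$ with output $\psi_2(\I)$, again by lifting the corresponding $M_2$-path to $M$. This further disturbs the $Q_1$-part, but only through $M_1$-transitions issued from the already output-stable $D_1$, so the $Q_1$-output stays fixed at $\psi_1(\I)$. The resulting configuration $C^*$ has both projections output-stable with the correct outputs; hence any configuration reachable from $C^*$ still projects to the correct, stable output in each coordinate, so $C^*$ is output-stable for $M$ and carries output $\psi_1(\I)\wedge\psi_2(\I)$. This establishes stable computation, and the case of $\vee$ is identical with $\omega=\omega_1\vee\omega_2$.

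The step I expect to be the main obstacle is the sequential driving in the previous paragraph: a single interaction of $M$ unavoidably updates both coordinates at once, so one cannot stabilize the two components in genuine isolation. The resolution is the invariance of output-stability — once $\pi_1$ has reached $D_1$, every subsequent $M_1$-transition keeps the first coordinate's output equal to $\psi_1(\I)$ — which is exactly what lets me stabilize the second component afterwards without undoing the first. The independence of the games, so that reachability in $M$ factors coordinate-wise through $M_1$ and $M_2$, is what makes this lifting legitimate; it is also the property that fails for a single Pavlovian protocol, which is why the statement is phrased for multi-protocols rather than protocols.
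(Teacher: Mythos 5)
Your proposal is correct and takes essentially the same route as the paper: negation by flipping the output function, reduction to a single binary connective, and conjunction by forming the multi-protocol on the concatenated list of games over the product state set with output $\omega_1 \wedge \omega_2$. The paper's proof stops at this construction, implicitly delegating the verification of stable computation to the parallel-composition argument of \cite{AspnesADFP2004}; your projection/lifting/stabilization argument supplies exactly that omitted verification.
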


As from Proposition \ref{prop:closure}, predicates
computable by Pavlovian  population multi-protocols are closed under
boolean operations, and as a Pavlovian population protocol is a
particular Pavlovian population multi-protocol, and as predicates
computable by (general)
population protocols are known to be exactly semi-linear predicates,
to prove 
Theorem \ref{th:main} we only need to prove that we can compute
{threshold} predicates and modulo predicates by Pavlovian
population protocols. This is the purpose of the following sections. 

\vspace{-1ex}
\section{Threshold Predicates}\label{sec:counting}
\vspace{-1ex}

In this section, we prove that we can compute threshold predicates
using Pavlovian protocols.

\begin{proposition}{\label{propsous}}
For any integer $k$, and any integers $a_1,a_2,\cdots, a_m$ there
exists a Pavlovian population protocol that computes $[\sum_{i=1}^m
  a_ix_i\ge k]$.
\end{proposition}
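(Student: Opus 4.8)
The plan is to realise the predicate by a cancellation‑and‑broadcast protocol and then to verify, state by state, that the resulting rule set meets the characterisation of deterministic Pavlovian rule sets given in Proposition~\ref{prop-sets}. Concretely, I would give each agent a bounded \emph{charge} $c\in\{-R,\dots,R\}$ together with a one‑bit output guess, where $R=\max(|k|,\max_i|a_i|)$, and set $\iota(\sigma_i)$ to the state of charge $a_i$. The intended invariant is that the sum of all charges stays equal to $\sum_i a_i x_i$ throughout the execution, so that the sign of the total charge decides the predicate; the additive constant $k$ is folded in by biasing the output readout (i.e.\ by reading the surviving charge against $k$ rather than against $0$). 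The two ingredients are: \emph{cancellation}, in which two agents of opposite charge neutralise each other, and \emph{broadcast}, in which the sign of a surviving charge is copied into the output bit of neutral agents. Both are of the ``jump to a state determined by the opponent'' form, which is exactly the shape permitted by Proposition~\ref{prop-sets}.

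The key design constraint, and the reason this is delicate, is the single‑$max$ condition: when an agent changes it may only jump to one fixed state $max^I(\cdot)$ or $max^R(\cdot)$ depending solely on its opponent. This forbids the usual population‑protocol trick of keeping a running counter incremented at each meeting, and it likewise forbids a \emph{partial} cancellation whose residue would depend on both agents' charges. My plan is therefore to realise all arithmetic through opponent‑canonical jumps only: full neutralisation $(+p,-p)\mapsto(0,0)$ of equal and opposite charges (opponent‑canonical, hence Pavlovian, via $max^I(+p)=0$, $max^R(-p)=0$), together with charge‑preserving \emph{transport} rules of the form $(0,v)\mapsto(v,0)$ — a neutral initiator absorbing the charge of its responder, which fits the characterisation with $max^R(v)=v$ and $max^I(0)=0$ — and analogous \emph{split} rules that let a charge of magnitude $>1$ shed a unit onto a (freshly neutralised) agent. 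Transport and split serve to rearrange charges so that, up to annihilation of equal‑opposite pairs, the population is driven to a configuration whose surviving charges all carry one sign (or none), exposing the sign of $\sum_i a_i x_i$.

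I would then prove correctness in the usual two steps. First, an \emph{invariant} step: every rule (cancellation, transport, split) preserves the total charge, so it equals $\sum_i a_i x_i$ in every reachable configuration. Second, a \emph{convergence} step using fairness: from any configuration there is a reachable one in which no opposite charges and no further split/transport moves are enabled, which forces all surviving charges to share a single sign; by fairness such a terminal charge pattern is reached and then persists, after which the broadcast rules propagate the corresponding bit to every agent, so each agent stabilises to $[\sum_i a_i x_i\ge k]$. Finally I would list explicitly, state by state, the data $Stable^I, Stable^R, max^I, max^R$, so that Proposition~\ref{prop-sets} certifies the whole rule set as deterministic Pavlovian, and hence (by Definition~\ref{def:1}) realisable from a game with a suitable threshold $\Delta$.

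The step I expect to be the main obstacle is making the treatment of \emph{coefficients of magnitude larger than $1$ and of the additive constant $k$ consistent with the single‑$max$ constraint}. Since no agent may update to a state depending on its own current charge, charges cannot be decremented unit by unit, and the transport/split gadgets must be arranged so that the forced jumps $max^I(\cdot),max^R(\cdot)$ imposed by the various interactions never conflict — the same opponent state must dictate the same target in every rule in which it occurs, which in particular makes the equal‑magnitude restriction on cancellation essential and forces all unequal pairs either to be stable or to be reduced first by splitting. Producing a single coherent assignment of stable sets and maxima that simultaneously yields charge‑preserving cancellation, correct transport, enough neutral agents to carry out the splits, and a correct $k$‑biased readout is where the real work lies; the cancellation‑and‑broadcast correctness argument above is then comparatively routine.
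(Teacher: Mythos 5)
Your high-level template (a conserved total weight, driving the population to a canonical configuration, then broadcasting via a zero-weight flag state) matches the paper's, and you correctly identify the crux: by Proposition~\ref{prop-sets}, an agent that changes must jump to a state determined by its \emph{opponent} alone. The genuine gap is that the three gadgets you actually propose cannot coexist under that constraint, and one of them violates it outright. In the paper's notation: your cancellation rule $(+p)(-p)\to 0\,0$ forces $max^R(-p)=0$, while your transport rule $(0)(v)\to (v)(0)$, instantiated at $v=-p$, forces $max^R(-p)=-p$; the same responder state would dictate two different targets. Likewise cancellation forces $max^I(+p)=0$, which is incompatible with any legal splitting rule involving initiator $+p$ (the only Pavlovian form of a split is the swap $(+p)(0)\to(+1)(p-1)$, which needs $max^I(+p)=p-1$). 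Worse, your split as you describe it --- ``a charge of magnitude $>1$ sheds a unit onto a neutral agent'', i.e.\ $(v)(0)\to (v-1)(+1)$ --- has the charged agent move to $v-1$, a state depending on its \emph{own} charge; since all initiators changing against a responder $0$ must jump to the single state $max^R(0)$, you cannot have both $+3\mapsto+2$ and $+4\mapsto+3$ against a $0$. You flag this consistency problem yourself and defer it as ``where the real work lies'', but that deferred step \emph{is} the content of the proposition: without one concrete, conflict-free table of $Stable^I$, $Stable^R$, $max^I$, $max^R$, nothing has been proved.

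The missing idea is the paper's single uniform mechanism that replaces all three of your gadgets: the swap-with-unit-transfer rule $pq\to(q+1)(p-1)$, in which \emph{both} changing agents jump to (opponent's value $\pm1$). This is automatically opponent-canonical, and iterating it simultaneously produces cancellation of opposite charges, transport, splitting, and --- crucially --- \emph{consolidation} of same-sign charges, which your gadget set cannot express at all (absorbing a unit into an arbitrary charge, $(+1)(p)\to(0)(p+1)$, is again own-state-dependent unless rewritten as a swap). Consolidation is exactly what your ``$k$-biased readout'' silently requires: if $\sum_i a_ix_i=5$ with $k=4$ and the surviving charges are scattered as $2+2+1$, no individual agent's charge clears $k$, so reading each agent against $k$ wrongly outputs $0$. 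The paper handles this by taking $M=\max(|a_i|,2k-1)$ (your $R=\max(|k|,\max_i|a_i|)$ is too small for that argument), pumping every surviving positive agent into $[k,M]$ by unit transfers, broadcasting with the extra state $\top$, and treating $k=1$ by a separate, simpler protocol; checking that the stable configurations are exactly the unanimous ones is where its proof then spends its effort.
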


First note, that we can assume without loss of generality that $k \ge
1$. Indeed, $[\Sigma a_ix_i\ge -k]= [\Sigma (-a_i)x_i\le k] = [\Sigma
  (-a_i)x_i< k+1]$ which is the negation of $[\Sigma (-a_i) x_i\ge
  k+1]$. Thus from a population protocol computing $[\Sigma (-a_i)
  x_i\ge k+1]$ with $k \geq 0$, we just have to inverse the output
function to obtain a population protocol that computes $[\Sigma
  a_ix_i\ge -k]$. 

The purpose of the rest of this section is to
prove Proposition \ref{propsous}. We first discuss some basic ideas: 
Our techniques are inspired by the work of Angluin et
al.~\cite{angluin2007cpp}. The set of states we use is the
set of integers from $[-M,M]$ where $M = \max (|a_i|,2k-1)$. 
Each agent with input $\sigma_i$ is
given an initial weight of $a_i$. During the execution, the sum of the
weights over the whole population is
preserved. In~\cite{AngluinAE2006semilinear}, the general idea is the
following: two interacting agents with positive weights $p$ and $q$
such that $p+q \leq M$ are transformed into an agent with weight $0$
and an agent with weight $p+q$, while two agents with weight $p$ and
$q$ such that $p+q > M$ are transformed into two agents with weight 
$\lfloor (p+q)/2 \rfloor$ and $\lceil (p+q)/2 \rceil$ that are both
greater or equal to $k$.

In our setting, we cannot use the same rules since all agents that
change their states when they meet an agent in state $p$ while being
initiator (resp. responder) must take the same state that only depends
of $p$. To avoid this problem, a trick is to use rules of  the following form: $pq \rightarrow (p+1)(q-1)$. However, we also have to make
sure that the protocol enables all agents to agree in the final
configuration. Whereas this kind of consideration is easy in the classical
population protocol model, this turns out to be tricky in our
settings.

We describe a protocol that computes
 $[\sum_{i=1}^m a_ix_i\ge k]$. Our protocol is defined as follows:  we
 consider $\Sigma=\{\sigma_1, \ldots, \sigma_l\}$, $Q=\{\top\} \cup
 [-M,M]$; for all $i$, $\iota(\sigma_i)=a_i$; and we take $\omega
 (\top) = 1$ and for any $p \in [-M, M]$, $\omega(p) = 1$ if and only
 if $p \geq 1$.

We distinguish two cases: either $k = 1$, or $k \geq 2$. We present
two protocols here, because we need to have a mechanism in our
protocols to enable to ``broadcast'' the result; this is not so
difficult in the first case whereas it is more technical in the second
one.  
Due to lack of space, we only give the rules for $k=1$ (the proof can
be found in Appendix~\ref{proof:casun}), but provide a full proof for
the case $k \ge 2$.

\noindent\textbf{Case $\mathbf{k = 1}$.}
Our protocol computing $[\Sigma a_ix_i \geq 1]$ is defined as follows (see 
Appendix \ref{proof:casun}).
%
The rules are the
following. 

\noindent{\small{
\begin{tabular}{ccc}
$
\begin{array}{cc@{\rightarrow}cl}
 \hspace{1cm}  &\top\top  &\top \top  \\
&1\top   & 1\top \\
&\multicolumn{3}{l}{}\\
\multicolumn{4}{l}{\forall n \in [-M,0],  \forall p \in [2,M-1]}  \\
&n\top  & n 0   &  \\ 
&nx  & nx & \forall x \in [-M,M],  \\
\end{array}
$
&  \hspace{1cm}  &
$
\begin{array}{c@{\rightarrow}cl}
\multicolumn{3}{l}{} \\
\top x  & \top x & \forall x \in [-M,M] \\
1n  & (n+1)\top & \forall n \in [-M,0]  \\
1p & 1p & \forall p \in [1,M]\\
\multicolumn{3}{l}{} \\
 p\top  & p\top &  \\
pn & (n+1)(p-1) &  \\ 
pp' & pp' & \forall p' \in [1,M]\\
\end{array}
$\\
\end{tabular} 
}
}

\noindent\textbf{Case $\mathbf{k \ge 2}$.}
Our protocol is deterministic and from Proposition~\ref{prop-sets}
uniquely determined by the sets $Stable^I(q)$, $Stable^R(q)$, and by
the values $max^I(q)$, $max^R(q)$ defined as follows. 
{\small $$
\begin{array}{|c|c|c|c|c|}
\hline
 q \in Q & Stable^I(q) & max^I(q) & Stable^R(q) & max^R(q) \\ 
 \hline
\top & \{\top\} \cup [-M,0] \cup [k,M] & -1 & \{\top\} \cup [-M,M] &\\

n \in [-M,-1] & [-M,M] & 0 & \{\top\} \cup [-M,0] & (n+1)\\

0 & [-M,M] & 0 & \{\top\} \cup [-M,k-1] & 1\\

1 & \{\top,0,M\} & \top & [-M,0] & 2 \\

p \in [2,k-1] & \{\top,0,M\} & (p-1) & [-M,0] & (p+1) \\

b \in [k,M-1] & \{\top\} \cup [k,M] & (b-1) & \{\top\} \cup [-M,0]
\cup [k,M] & (b+1) \\

M & \{\top\} \cup [k,M] & (M-1) & \{\top\} \cup [-M,M] &\\
\hline
\end{array}
$$}

The transition rules we obtain from these sets and values are the 
following.

\noindent{\small{
\begin{tabular}{cc}
$
\begin{array}{cc@{\rightarrow}cl}
\hspace{1cm}&\top\top  &\top \top  \\
&\top p  & (p+1)(-1) & \forall p \in [1,k-1]  \\
& 1\top   & 1\top \\
& 1x  & (x+1)\top & \forall x \notin \{\top,0,M\}  \\
\multicolumn{4}{l}{\forall p \in [2,k-1]} \\
& p\top  & p\top &\\
& p0  & p0 &  \\
\multicolumn{4}{l}{\forall n \in [-M,0]} \\
& n\top  & n 0\\ 
\multicolumn{4}{l}{\forall b \in  [k,M]} \\
&b\top  & b\top &  \\
& bx  & (x+1)(b-1)& \forall x \in [-M,k-1] \\
\end{array}
$
& 
$
\begin{array}{c@{\rightarrow}ll}
\top x  & \top x & \forall x \in [-M,0] \cup [k,M] \\
 \multicolumn{3}{l}{}\\
10  & 10\\
 1M  & 1M \\
 \multicolumn{3}{l}{}\\
px & (x+1)(p-1) & \forall x \notin \{\top,0,M\}\\ 
 pM  & pM & \\
\multicolumn{3}{l}{} \\
nx  & nx & \forall x \in [-M,M]\\
\multicolumn{3}{l}{} \\
 bb' & bb' & \forall b' \in  [k,M]\\
\multicolumn{3}{l}{} \\

\end{array}
$\\
\\
\end{tabular} 
}
}

We say that an agent in state $x \in [-M,M]$ has weight $x$ and that
an agent in state $\top$ has weight $0$. Note that in the initial
configuration the sum of the weights of all agents is exactly $\Sigma
a_i x_i$. Note that any of the rule of our protocol does not modify
the total weight of the population, i.e., at any step of the
execution, the sum of the weights of all agents is exactly $\Sigma a_i
x_i$.

Note that the stable configurations, (i.e., the configurations where
no rule can be applied to modify the state of any agent), are the
following:
\vspace{-1ex}
\begin{itemize}
\item every agent $a$ is in some state $n(a) \in [-M,0]$,
\item a unique agent is in state $p \in [1,k-1]$ and every
  other agent is in state $0$. 
\item every agent $a$ is either in some state $b(a) \in [k,M]$ or in
  state $\top$.
\end{itemize}
\vspace{-1ex}

Note that no agent starts in state $\top$, and that no rule enables
the two interacting agents to enter the state $\top$ except for the rule
$\top\top \rightarrow \top\top$. Thus, we know that it is impossible
that all agents are in state $\top$. Consequently, in the last case
described, we know that there is at least one agent in a state $b\in
[k,M]$.

Note that in any stable configuration, all agents have the same
output; if $\Sigma a_ix_i \geq k$ then all agents output $1$, while in
all the other cases, the agents output $0$.  
Thus, if the population reaches a stable configuration, we know that the
computed output is correct and that it will not be modified any
more. Now, we should prove that the fairness condition ensures that we
always reach a stable configuration. In fact, it is sufficient to
prove that from any reachable configuration, there exists an execution
that reaches a stable configuration.


Consider any configuration reached during the execution. As long as
there is an agent in state $p \in [1,M]$ and an agent in state $n \in
[-M,-1]$, we apply $pn \rightarrow (n+1)(p-1)$.  Thus we can always
reach a configuration where the states of all agents are in
$[-M,0]\cup\{\top\}$ if $\Sigma a_ix_i \leq 0$, or in
$[0,M]\cup\{\top\}$ otherwise.

If $\Sigma a_ix_i \leq 0$, then there is at least one agent in state
$n \in [-M,0]$, since all agents cannot be in state $\top$. In this
case, applying iteratively the rule $n\top \rightarrow n0$, we reach a
stable configuration where all agents have a state in $[-M,0]$.

Suppose now that $\Sigma a_ix_i \in [1,k-1]$. Since $\Sigma a_ix_i \in
[1,k-1]$, each agent with a positive weight is in a state in
$[1,k-1]$. Applying iteratively the rule $pp' \rightarrow (p-1)(p'+1)$
where $p,p' \in [1,k-1]$, we reach a configuration where there is
exactly one agent in state $p \in [1,k-1]$ while all the other agents
are in state $0$ or $\top$. Applying iteratively the rules $\top p
\rightarrow (p+1)(-1)$ and $(p+1)(-1) \rightarrow 0p$, we reach a
configuration where one agent is in state $p \in [1,k-1]$ while all
the other agents are in state $0$.

Finally, assume that $\Sigma a_ix_i \geq k$. If there is an agent in
state $p \in [1,k-1]$, we know that there is at least another agent in
state $q \in [1,M]$. If $p + q \leq M$, applying iteratively the rule
$pq \rightarrow (p-1)(q+1)$ between these two agents, we reach a
configuration where one of these two agents is in state $0$ while the
other is in state $p+q$. In this case, we have strictly reduced the
number of agents in a state in $[1,k-1]$. If $p+q > M \geq 2k$, then
$q \in [k,M]$, and applying iteratively the rule $qp \rightarrow
(q-1)(p+1)$, we reach a configuration where one agent is in state $k$
while the other agent is in state $p+q-k \in [k,2M]$. Here again, we
have strictly reduced the number of agents in a state in
$[1,k-1]$. Applying these rules as long as there exists an agent in
state $p \in [1,k-1]$, we reach a configuration where all agents are
either in a state in $[k,M]$, or in state $0$ or $\top$. Since $\Sigma
a_ix_i \in [k,M]$, we know there exists an agent in state $b \in
[k,M]$. Applying iteratively the rules $b0 \rightarrow 1(b-1)$ and
$1(b-1) \rightarrow b\top$, we reach a stable configuration where all
agents are either in state $\top$ or in a state in $[k,M]$.

\vspace{-1ex}
\section{Modulo Counting}\label{sec:modulo}
\vspace{-1ex}

\begin{proposition}{\label{prop-modulo}}
For any integers $k$, $b$, and any integers $a_1,a_2,\cdots, a_m$ there exists a Pavlovian population protocol that
computes $[\sum_{i=1}^m a_ix_i \equiv b \mod k]$.
\end{proposition}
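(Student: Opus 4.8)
The plan is to compute the modulo predicate $[\sum_{i=1}^m a_i x_i \equiv b \bmod k]$ by a Pavlovian protocol whose agents carry residues modulo $k$. Each agent with input $\sigma_i$ receives an initial value $a_i \bmod k \in \{0,1,\dots,k-1\}$, and the invariant I would maintain throughout the execution is that the sum of all agents' residues is congruent to $\sum_i a_i x_i \bmod k$. Following the spirit of the threshold construction in Section~\ref{sec:counting}, the core dynamics should let residues \emph{accumulate} onto a single agent: when two agents carrying residues $p$ and $q$ interact, they should transform toward $(0, p+q \bmod k)$, so that weight drifts onto one ``leader'' while the others collapse to $0$. The predicate value then depends only on whether that accumulated residue equals $b$. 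As in the threshold case, the essential Pavlovian constraint from Proposition~\ref{prop-sets} is that the responder's new state may depend only on the initiator's state (and symmetrically), so I cannot directly write a rule $pq \to 0\,(p{+}q)$ for all $q$; instead I must use incremental rules of the form $pq \to (p{-}1)(q{+}1)$ (with the understanding that arithmetic is taken modulo $k$ on the representatives $\{0,\dots,k-1\}$), applied repeatedly, exactly as the trick $pq \to (p{+}1)(q{-}1)$ was used before.

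The main steps, in order, would be: first, fix the state set $Q = \{\top\} \cup \{0,1,\dots,k-1\}$ (with $\top$ again serving as a neutral ``already broadcast'' marker of weight $0$), set $\iota(\sigma_i) = a_i \bmod k$, and define the output function $\omega$ so that an agent outputs $1$ exactly in the configurations representing $\sum a_i x_i \equiv b$. Second, specify the transition rules via the tables $Stable^I(q)$, $max^I(q)$, $Stable^R(q)$, $max^R(q)$, so that determinism and the Pavlovian property are automatic by Proposition~\ref{prop-sets}; the rules must (a) funnel all nonzero residues onto one agent using the decrement/increment rules, and (b) once a single agent holds the total residue $r = \sum a_i x_i \bmod k$, broadcast the Boolean answer $[r = b]$ to every agent. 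Third, identify the stable configurations and check that in each of them all agents share the correct common output. Fourth, prove the convergence/liveness part: from any reachable configuration there exists an execution reaching a stable one, so that fairness forces a stable configuration to be reached (the same argument structure as at the end of Section~\ref{sec:counting}).

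The hard part will be the \emph{broadcasting} step. Unlike the threshold case, where the answer is monotone in an ordered quantity and ``large'' states $[k,M]$ can serve as a self-reinforcing signal, here the predicate is non-monotone: the correct answer $[r=b]$ is a single isolated residue value among $k$ equally-plausible ones, and the leader's residue is a moving target until accumulation finishes. I must therefore design the protocol so that no agent commits to an output $1$ prematurely — an agent should only be convinced the answer is ``yes'' after it has effectively learned that the unique accumulated residue equals $b$, and this belief must be retractable if further interactions reveal residual weight elsewhere. Concretely, I expect to need extra bookkeeping states (or a product/multi-protocol structure) distinguishing ``a leader carrying residue $r$ exists'' from ``the answer has been propagated,'' and to arrange the $\top$-marked agents so that an agent only enters the $1$-output $\top$-state through contact with a leader holding residue exactly $b$, while contact with a leader holding any other residue pushes agents to a $0$-output state. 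Verifying that these competing broadcast rules cannot deadlock in a mixed configuration, and that they respect the rigid Pavlovian ``responder depends only on initiator'' restriction, is where the real care is required; getting the $\top$-handling rules consistent with Proposition~\ref{prop-sets} is the crux, just as the author flagged that broadcasting ``turns out to be tricky in our settings.''
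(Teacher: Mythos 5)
Your plan coincides with the paper's actual strategy for the case $b \in [1,k-1]$: the same state set $\{\top\}\cup[0,k-1]$, the same weight invariant modulo $k$, accumulation of the residue onto a single agent via incremental exchange rules, and the same convergence scheme (exhibit a path to a stable configuration from any reachable configuration, then invoke fairness). But what you have written is a plan, not a proof: the entire mathematical content of this proposition is the explicit rule set compatible with the structural characterization of Proposition~\ref{prop-sets}, and you defer exactly that step (``I expect to need extra bookkeeping states\dots'', ``\dots is where the real care is required''). The paper resolves the broadcast difficulty you flag \emph{without} any extra bookkeeping states: it sets $\omega(\top)=1$ and $\omega(p)=1$ iff $p=b$, and performs the ``yes'' broadcast by the two-step exchange $0b \rightarrow (b+1)(k-1)$ followed by $(b+1)(k-1) \rightarrow \top b$ (temporarily splitting the leader's residue so that the drained agent re-emerges as $\top$ rather than $0$), while the ``no'' broadcast uses $\top 0 \rightarrow 00$, and $p\top \rightarrow 1(p-1)$ followed by $1(p-1) \rightarrow p0$. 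The stable configurations are then: a unique agent in state $b$ with all others in $\top$ (all output $1$), or at most one agent in state $p \neq b$ with all others in $0$ (all output $0$); your worry about premature, non-retractable commitment is answered precisely by these marker-converting rules, which keep flipping $\top$/$0$ as long as the leader's residue can still change. None of this construction appears in your proposal, so the crux is missing.

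Second, the statement covers all $b$, and your single construction cannot work when $b=0$: the target residue then collides with the role of $0$ as the drained state, so the rules that treat $b$ specially (such as $0b \rightarrow (b+1)(k-1)$ and $b\top \rightarrow b\top$) become inconsistent with the rules needed for $0$. The paper handles $b=0$ separately: for $k=2$ by negating the $b=1$ protocol (output inversion preserves Pavlovian-ness), and for $k\geq 3$ by a genuinely different protocol with \emph{two} weight-zero markers $A$ and $B$ replacing $\top$, with rules such as $1(k-1)\rightarrow BA$ and $AB \rightarrow 00$, and with $\omega(0)=1$ and all other states outputting $0$. Finally, note that your fallback idea of using ``a product/multi-protocol structure'' would prove a strictly weaker statement: Proposition~\ref{prop-modulo} asserts the existence of a single Pavlovian population protocol, and the paper explicitly observes that Pavlovian multi-protocols are in general not Pavlovian protocols, so that escape route is not available here.
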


Due to lack of space, we only give the rules of the protocol for the
case when $b \in [1,k-1]$ (see Appendix \ref{proof:casmodulo} for the
complete proof). In that case, our protocol is defined as follows:
$\Sigma=\{\sigma_1, \ldots, \sigma_l\}$, $Q=\{\top\} \cup [0,k-1]$;
for all $i$, let $\iota(\sigma_i) \equiv a_i \mod k$; let $\omega
(\top) = 1$ and for any $p \in [0, k-1]$, let $\omega(p) = 1$ if and
only if $p = b$.

The rules are the following:

\noindent{\small{
\begin{tabular}{ccc}
\\
$
\begin{array}{cc@{\rightarrow}cl}
\hspace{1cm}& \top\top  &\top \top  \\
& b\top  & b\top & \\
& 0\top  & 0\top & \\
 \multicolumn{4}{l}{\forall p \in [1,k-1]}\\
& \top p  & \top p &  \\
& p(k-1) & \top (p-1)\\
 \multicolumn{4}{l}{\forall  p \in [0,k-1]\setminus\{b\}}\\
&0p  & 0p & \\
 \end{array}
$
& \hspace{1cm}&
$
\begin{array}{r@{\rightarrow}cl}
\top 0  & 0 0 & \\
0b  & \top(k-1)& \textrm{if } b =  k-1\\
0b  & (b+1)(k-1)& \textrm{if } b\neq k-1\\
 \multicolumn{3}{l}{}\\
 pp'  & pp' & \forall p' \in [0,p-1] \\
 pp'  & (p'+1)(p-1) & \forall p' \in [p,k-2] \\

 \multicolumn{3}{l}{}\\
p \top  & 1 (p-1) &\\
 \end{array}
$
\end{tabular}
}}

\begin{extensions}
\section{Extensions}\label{sec:conclusion}

\TODOBY{ ECRIRE CA}{ Olivier } 

\begin{theorem}
Pavlovian immediate observation protocols compute exactly the same predicates as
immediate observation protocols, that is to say $COUNT_*$.
\end{theorem}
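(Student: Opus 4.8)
The plan is to prove the two inclusions separately: the easy containment of Pavlovian immediate observation protocols inside immediate observation protocols, and then the construction of Pavlovian witnesses for every predicate in $COUNT_*$.

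First I would fix the convention that, in an immediate observation rule $ab\to a'b'$, the responder $\II$ is the \emph{observed} agent and never changes, so that $b'=b$ always (equivalently $Stable^I(a)=Q$ for every $a$, in the language of Proposition~\ref{prop-sets}), while the observer $\I$ updates through $a'=\delta^R_b(a)$. With this convention a Pavlovian immediate observation protocol is exactly a Pavlovian protocol one of whose two update maps is the identity. The first inclusion is then immediate: such a protocol is in particular an immediate observation protocol, so by the known characterization of the immediate observation model \cite{Newmodels} it can only compute predicates in $COUNT_*$.

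For the converse I would reduce to generators. Recall that $COUNT_*$ is the Boolean closure of the threshold-count predicates $[\#\sigma\ge t]$, for an input symbol $\sigma$ and a constant $t$. I would exhibit, for each such generator, a single Pavlovian immediate observation protocol: keep the observed-agent convention above, send every non-$\sigma$ input to a ``level'' $0$ and every $\sigma$ input to level $1$, use levels $0,1,\dots,t$ with a $\top$-like accepting behaviour at level $t$, and let the observer's update, as a function of the observed level $j$, be: rest if $j=0$; if $j\in[1,t-1]$, advance from level $j$ to $j+1$ and rest at every other level; and if $j=t$, jump to level $t$. Output is $1$ exactly at level $t$. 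The key point is that each of these families has the ``rest-or-jump'' shape demanded by Proposition~\ref{prop-sets}: for an observed level $j\in[1,t-1]$ the unique non-resting observer state is $j$ itself, so $max^R(j)=j+1$; for $j=t$ the only resting state is $t$, so $max^R(t)=t$; and for $j=0$ the whole state set rests. Hence the protocol is Pavlovian. Correctness follows from the invariant that advancement occurs only between two agents sharing a level $<t$, so a configuration is stable only when the $\sigma$-agents occupy distinct levels in $[1,t-1]$ together with arbitrarily many at level $t$; a pigeonhole count then gives that the maximal level ever reached equals $\min(\#\sigma,t)$, and the rule ``observe $t\Rightarrow$ become $t$'' broadcasts acceptance to the whole population precisely when $\#\sigma\ge t$.

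Finally I would close under Boolean operations. A single Pavlovian immediate observation protocol need not itself be closed under conjunction, but Proposition~\ref{prop:closure} shows that multi-games are closed under Boolean operations, and the product construction preserves immediate observation (if the observed agent is unchanged in every coordinate game it is unchanged overall) as well as the Pavlovian property (a product of Pavlovian games is a Pavlovian multi-protocol), exactly as in the proof of Theorem~\ref{th:main}; negation is handled by flipping the output function. Composing the generator protocols in this way produces a Pavlovian immediate observation multi-protocol for every predicate of $COUNT_*$, which gives the second inclusion and hence the theorem. The step I expect to be the main obstacle is the generator construction: arranging the level-advancement dynamics so that counting to $t$ is correct under an adversarial immediate observation scheduler \emph{and} simultaneously forcing each observed-state update family into the single-target shape required by Proposition~\ref{prop-sets}; the Boolean closure is then routine given the earlier propositions.
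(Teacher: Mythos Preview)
The paper does not actually prove this theorem: it sits in an excluded \emph{extensions} section marked TODO, and the entire argument the paper records is the one-line note ``IDEE PREUVE: c'est que leur algo est ok'' --- i.e., the intended proof is simply to check that the standard immediate-observation protocol from~\cite{angluin2007cpp} for an arbitrary $COUNT_*$ predicate already satisfies the structural condition of Proposition~\ref{prop-sets}. Your generator construction for $[\#\sigma\ge t]$ is precisely that standard level-climbing protocol, and your verification that it is Pavlovian is correct and in the spirit of the paper's sketch.

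The gap is in your closure step. You invoke Proposition~\ref{prop:closure} and the product construction to handle Boolean combinations, but this yields a Pavlovian immediate-observation \emph{multi}-protocol, not a Pavlovian immediate-observation protocol; the theorem as stated is about single protocols, and the paper is explicit (Section~\ref{sec:multiprotocols}) that a Pavlovian multi-protocol need not be Pavlovian. Concretely, the product of two of your own generator protocols already fails Proposition~\ref{prop-sets}: when an observer in state $(1_a,0_b)$ sees $(1_a,1_b)$ it becomes $(2_a,0_b)$, whereas an observer in state $(0_a,1_b)$ seeing the same state becomes $(0_a,2_b)$; two unstable observer states land on different targets, so no valid $max^R((1_a,1_b))$ exists. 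Thus your argument proves the multi-protocol analogue but not the theorem itself. The paper's intended route avoids this by asserting that a \emph{single} standard IO protocol for an arbitrary $COUNT_*$ predicate (not just the generators) is already Pavlovian, so that no external Boolean closure is needed; since the section is unfinished, the paper does not spell out which construction from~\cite{angluin2007cpp} it has in mind or verify the claim.
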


IDEE PREUVE: c'est que leur algo est ok.

We don't know about Pavlovian immedate transmission.

REMARK:
\begin{itemize}
\item ATTENTION: PAVLOVIAN $\neq$ PAVLOVIAN AVEC CHGT AUTRE PAPIER
  (SUGGERE STRONGLY PAVLOVIAN DANS AUTRE)
\item IMMEDIATE OBSERVATION = 1 JEU EXTREMEMENT NATUREL.
\item IMMEDIATE TRANSMISSION= moins naturel. Stragégie ``aveugle. 
\end{itemize}
\end{extensions}

\vspace{-1ex}
\section{Conclusion}
\vspace{-1ex}

In this work, we present some (original an non-trivial) Pavlovian population protocols that compute
the general threshold and modulo predicates. From this, we deduced that
a predicate is computable in the Pavlovian population multi-protocol
model if and only if it is semilinear.  

In other words, we proved that restricting to rules that correspond to asymmetric
games in pairwise interactions is not a restriction. 

We however needed to consider multi-protocols, that is to say
multi-games. We conjecture that the Pavlovian population protocols
(i.e. non-multi-protocol) can not compute all semilinear predicates.
A point is that in such protocols the set of rules are very limited
(see Proposition~\ref{prop-sets}). In particular, it seems rather
impossible to perform an ``or'' operation between two modulo
predicates in the general case.

Notice that the hypothesis of asymmetric games seems also
necessary. We studied symmetric Pavlovian population protocols in
\cite{CSP08} where we demonstrated that some non-trivial predicates
can be computed. However, even very basic predicates,
like the threshold predicate counting up to $5$, seems problematic to be computed by symmetric
games. With asymmetric games, general threshold and modulo predicates
can be computed.

\bibliographystyle{plain}


\newpage
\appendix

\section{Proofs}

The following proofs are here for completeness of the refereeing
process. 

\subsection{Proof of Proposition 
\ref{prop:closure}}
\label{proof:closure}

The negation is easy to deal with, as we just need to change $\omega$
into $1-\omega$. From de Morgan's laws, we only need to prove closure
by conjunction.

For the conjunction of two multi-games, let consider the multi-game
including all the games present in the two initial multi-games. The
conjunction protocol is the one associated to the new multi-game with
output
function $$\omega'(q^1,\ldots,q^k,p^1,\ldots,p^l)=\omega_1(q^1,\ldots,q^k)*\omega_2(p^1,\ldots,p^l),$$
where $(q^1,\ldots,q^k)$ and $(p^1,\ldots,p^l)$ are the corresponding
games in the first and second multi-games, and $\omega_1$ and
$\omega_2$ the respective output functions.


\subsection{Proof of Our Protocol for Threshold Predicates when $k=1$}
\label{proof:casun}

Our protocol computing $[\Sigma a_ix_i \geq 1]$ is defined as follows:  
\begin{itemize}
\item $\Sigma=\{\sigma_1, \ldots, \sigma_l\}$.
\item $Q=\{\top\} \cup [-M,M]$, where $\top$ corresponds to a weight
  of $0$ but has a different output meaning than the state $0$. 
\item $\iota(\sigma_i)=a_i$.
\item $\omega (\top) = 1$ and for any $p \in [-M, M]$,
  $\omega(p) = 1$ if and only if $p \geq 1$. 
\end{itemize}

Our protocol is deterministic and from Proposition~\ref{prop-sets}
uniquely determined by the sets $Stable^I(q)$, $Stable^R(q)$, and by
the values $max^I(q)$, $max^R(q)$ defined as follows. 
$$
\begin{array}{|c|c|c|c|c|}
\hline
q \in Q & Stable^I(q) & max^I(q) & Stable^R(q) & max^R(q) \\ 
\hline
\top & \{\top\} \cup [-M,M] &   & \{\top\} \cup [-M,M] &\\

n \in [-M,-1] & [-M,M] & 0 & \{\top\} \cup [-M,0] & (n+1)\\

0 & [-M,M] & 0 & \{\top\} \cup [-M,1] & 1\\

1 & \{\top\}\cup [1,M] & \top &  \{\top\} \cup [-M,M] &  \\

p \in [2,M] & \{\top\}\cup [1,M] & (p-1) &  \{\top\} \cup [-M,M] & \\

\hline
\end{array}
$$

The transition rules we obtain from these sets and values are the
following. 
\noindent{{
\begin{tabular}{ccc}
$
\begin{array}{cc@{\rightarrow}cl}
 \hspace{1cm}  &\top\top  &\top \top  \\
&1\top   & 1\top \\
&\multicolumn{3}{l}{}\\
\multicolumn{4}{l}{\forall n \in [-M,0],  \forall p \in [2,M-1]}  \\
&n\top  & n 0   &  \\ 
&nx  & nx & \forall x \in [-M,M],  \\
\end{array}
$
&  \hspace{1cm}  &
$
\begin{array}{c@{\rightarrow}cl}
\multicolumn{3}{l}{} \\
\top x  & \top x & \forall x \in [-M,M] \\
1n  & (n+1)\top & \forall n \in [-M,0]  \\
1p & 1p & \forall p \in [1,M]\\
\multicolumn{3}{l}{} \\
 p\top  & p\top &  \\
pn & (n+1)(p-1) &  \\ 
pp' & pp' & \forall p' \in [1,M]\\
\end{array}
$\\
\end{tabular} 
}
}

We say that an agent in state $x \in [-M,M]$ has weight $x$ and that
an agent in state $\top$ has weight $0$. Note that in the initial
configuration the sum of the weights of all agents is exactly $\Sigma
a_i x_i$. Note that any of the rule of our protocol does not modify
the total weight of the population, i.e., at any step of the
execution, the sum of the weights of all agents is exactly $\Sigma a_i
x_i$.

Note that the stable configurations, (i.e., the configurations where
no rule can be applied to modify the state of some agent), are the
following:
\begin{itemize}
\item each agent $a$ is in some state $n(a) \in [-M,0]$,
\item each agent $a$ is either in some state $p(a) \in [1,M]$ or in
  state $\top$.
\end{itemize}

Note that no agent starts in state $\top$, and that no rule enables
the two interacting agents to enter the state $\top$ except for the rule
$\top\top \rightarrow \top\top$. Thus, we know that it is impossible
that all agents are in state $\top$. Consequently, in the last case
described, we know that there is at least one agent in a state $p\in
[1,M]$.

Note that in any stable configuration, all agents have the same
output; if $\Sigma a_ix_i \geq 1$ then all agents output $1$, while in
all the other cases, the agents output $0$. 

Thus, if the population reaches a stable configuration, we know that the
computed output is correct and that it will not be modified any
more. Now, we should prove that the fairness condition ensures that we
always reach a stable configuration. In fact, it is sufficient to
prove that from any reachable configuration, there exists an execution
that reaches a stable configuration. Indeed, since the number of
configurations is finite, it means that in any execution, there is a
stable configuration $S$ such that from any reachable configuration, 
$S$ is reachable. The fairness condition ensures that $S$ is
eventually reached. 

We now show that from any configuration, there exists an execution
that leads to a stable configuration.

Consider any configuration reached during the execution. As long as
there is an agent in state $p \in [1,M]$ and an agent in state $n \in
[-M,-1]$, we apply $pn \rightarrow (n+1)(p-1)$.  Thus we can always
reach a configuration where the states of all agents are in
$[-M,0]\cup\{\top\}$ if $\Sigma a_ix_i \leq 0$, or in
$[0,M]\cup\{\top\}$ otherwise.

If $\Sigma a_ix_i \leq 0$, then there is at least one agent in state
$n \in [-M,0]$, since all agents cannot be in state $\top$. In this
case, applying iteratively the rule $n\top \rightarrow n0$, we reach a
stable configuration where all agents have a state in $[-M,0]$.

If $\Sigma a_ix_i \geq 1$, then there is at least one agent in state
$p \in [1,M]$. If there is an agent in state $0$, we apply the
following rules.  If there is no agent in state $1$, we apply a rule
$p0 \rightarrow 1(p-1)$ for $p\in [2,M]$ to create a $1$. Then,
applying iteratively the rule $10 \rightarrow 1\top$, we reach a
stable configuration where each agent is either in a state in $[1,M]$,
or is in state $\top$.

This ends the proof of Proposition~\ref{propsous} when $k = 1$.






\subsection{Proof of Our Protocol for Modulo Counting}
\label{proof:casmodulo}

Here again, we use rules of the form $np \rightarrow (n+1)(p-1)\mod k$ so
that at the end of the computation there is only one node with a
non-zero weight. We should also add some rules to be able to ensure
the ``broadcast'' of the result among the agents. We consider two
  cases, depending on whether $b$ is equal to zero or not. 



\subsubsection{Case $b \in [1,k-1]$.}

Our protocol is defined as follows: 
\begin{itemize}
\item $\Sigma=\{\sigma_1, \ldots, \sigma_l\}$.
\item $Q=\{\top\} \cup [0,k-1]$, where $\top$ corresponds to a weight
  of $0$ but has a different output meaning than the state $0$. 
\item $\iota(\sigma_i) \equiv a_i \mod k$.
\item $\omega (\top) = 1$ and for any $p \in [0, k-1]$,
  $\omega(p) = 1$ if and only if $p = b$. 
\end{itemize}

Our protocol is deterministic and from Proposition~\ref{prop-sets}
uniquely determined by the sets $Stable^I(q)$, $Stable^R(q)$, and by
the values $max^I(q)$, $max^R(q)$ defined as follows. In the following
table, when $p = k-1$ (resp. $b = k-1$), $p+1$ (resp. $b+1$) should be
understand as $\top$.

$$
\begin{array}{|c|c|c|c|c|}
\hline
q \in Q & Stable^I(q) & max^I(q) & Stable^R(q) & max^R(q) \\ 
\hline
\top & Q &  & \{\top, 0, b\}  & 1 \\

0 & Q\setminus\{b\} & (k-1) & [0,k-1] & 0\\

b & \{\top\} \cup [0,b-1] & (b-1) & \{\top\}\cup[b+1,k-1] & (b+1) \\

p \in [1,k-1]\setminus\{b\} & [0,p-1] & (p-1) & \{\top,0\}\cup[p+1,k-1] & (p+1) \\

\hline
\end{array}
$$

The transition rules we obtain from these sets and values are the
following.

\noindent{{
\begin{tabular}{ccc}
\\
$
\begin{array}{cc@{\rightarrow}cl}
\hspace{1cm}& \top\top  &\top \top  \\
& b\top  & b\top & \\
& 0\top  & 0\top & \\
 \multicolumn{4}{l}{\forall p \in [1,k-1]}\\
& \top p  & \top p &  \\
& p(k-1) & \top (p-1)\\
 \multicolumn{4}{l}{\forall  p \in [0,k-1]\setminus\{b\}}\\
&0p  & 0p & \\
 \end{array}
$
& \hspace{1cm}&
$
\begin{array}{r@{\rightarrow}cl}
\top 0  & 0 0 & \\
0b  & \top(k-1)& \textrm{if } b =  k-1\\
0b  & (b+1)(k-1)& \textrm{if } b\neq k-1\\
 \multicolumn{3}{l}{}\\
 pp'  & pp' & \forall p' \in [0,p-1] \\
 pp'  & (p'+1)(p-1) & \forall p' \in [p,k-2] \\

 \multicolumn{3}{l}{}\\
p \top  & 1 (p-1) &\\
 \end{array}
$
\end{tabular}
}}

Note that in the initial configuration the sum of the weights of all
agents is exactly $\Sigma a_i x_i \mod k$. Note that the application of
any rule of our protocol does not modify this value, i.e., at any step
of the execution, the sum of the weights of all agents is exactly
$\Sigma a_i x_i \mod k$.

Note that the stable configurations, (i.e., the configurations where
no rule can be applied to modify the state of some agent), are the
following:
\begin{itemize}
\item there exists a unique agent in state $b$ and all
  other agents are in state $\top$. 
\item there exists at most one agent in state $p \in
  [1,k-1]\setminus\{b\}$ and all other agents are in state $0$. 
\end{itemize}

As in the protocols for threshold predicates, no rule can transform
the states of two interacting agents into $\top$ except for the rule
$\top\top \rightarrow \top\top$. Since initially, no agent is in state
$\top$, it is impossible to reach a configuration where all agents are
in state $\top$. 

In any stable configuration, either all agents output $1$ (if $\Sigma
a_ix_i \equiv b \mod k$), or all agents output $0$. We now show that from
any reachable configuration, we can reach a stable configuration.

As long as there are two agents in states $p,p' \in [1,k-1]$ with $p
\leq p'$, we can apply iteratively the rule  $pp' \rightarrow
(p'+1)(p-1)$ between these two agents. Doing so, either one agent
enters state $0$, or one agent is in state $k-1$, while the other is
in state $p'' \in [1,k-1]$. In the latter case, applying the rule
$p(k-1)\rightarrow \top(p-1)$, we also decrease the number of agents
with a positive weight by $1$. Thus, we can always reach a
configuration where there is at most one agent $a$ in state $p \in
[1,k-1]$, while all the other agents are in state $0$ or $\top$. 

If $\Sigma a_ix_i \equiv 0 \mod k$, then all agents are in state $0$ or
$\top$, and we know there is at least one agent in state $0$. Applying
as long as necessary the rule $\top 0 \rightarrow 00$, we reach a
stable configuration where all agents are in state $0$.

If $\Sigma a_ix_i \equiv p \mod k$ with $p \in [1,k-1] \setminus \{b\}$
then one agent is in state $p$ while all the other agents are in state
$0$ or $\top$. If there is an agent in state $\top$, we apply the rule
$p\top \rightarrow 1(p-1)$. If $p =1$, then there is one more agent in
state $0$. If $p > 1$, we apply the rule $1(p-1)\rightarrow p0$, to
also get one more agent in state $0$. Iterating this process, we reach
a stable configuration where one agent is in state $p$ while all other
agents are in state $0$.

If $\Sigma a_ix_i \equiv b \mod k$, then one agent is in state $b$ while
all the other agents are in state $0$ or $\top$. As long as there is
an agent in state $0$, if $b \neq k-1$, we apply the rules $0b
\rightarrow (b+1)(k-1)$ and $(b+1)(k-1) \rightarrow \top b$.  If $b =
k-1$, we apply the rule $0(k-1)\rightarrow \top(k-1)$. Doing so, we
reach a stable configuration where one agent is in state $b$ while all
other agents are in state $\top$.

\subsubsection{Case $b=0$.}

If $k = 2$, then $[\Sigma a_ix_i \equiv 0 [2]]$ is the negation of
$[\Sigma a_ix_i \equiv 1 [2]]$. In this case, since pavlovian protocols
are closed under negation, we know from the previous case that there is a
pavlovian protocol that computes $[\Sigma a_ix_i \equiv 0 [2]]$. 

In the following, we assume that $k \geq 3$. Our protocol is defined
as follows:  
\begin{itemize}
\item $\Sigma=\{\sigma_1, \ldots, \sigma_l\}$.
\item $Q=\{A,B\} \cup [0,k-1]$, where $A$ and $B$ corresponds to a
  weight of $0$ but has a different output meaning than the state $0$.
\item $\iota(\sigma_i) \equiv a_i \mod k$.
\item $\omega (0) = 1$ and for any $x \in [1, k-1]\cup\{A,B\}$,
  $\omega(x) = 0$.
\end{itemize}

Our protocol is deterministic and from Proposition~\ref{prop-sets}
uniquely determined by the sets $Stable^I(q)$, $Stable^R(q)$, and by
the values $max^I(q)$, $max^R(q)$ defined as follows.

$$
\begin{array}{|c|c|c|c|c|}
\hline
&&&&\\
q \in Q & Stable^I(q) & max^I(q) & Stable^R(q) & max^R(q) \\ 
&&&&\\
\hline
A & Q\setminus\{B\} & 0 & Q\setminus\{B\} & 0 \\

B & Q\setminus\{A\} & 0 & Q\setminus\{A\} & 0 \\

0 & \{A,B,0,1,(k-1)\} & (k-1) & [0,k-1] & 0\\

1 & \{A\} & A & Q\setminus\{1\} & 2 \\

(k-1) & Q\setminus\{(k-1)\} & (k-2) & \{B\} & B \\

p \in [2,k-2]& \{A,B\}  \cup [0,p-1] & (p-1) & \{A,B\}\cup[p+1,k-1] &
(p+1) \\ 

\hline
\end{array}
$$

The transition rules we obtain from these sets and values are the
following. 
\noindent{\small{
\begin{tabular}{cc}
\\
$
\begin{array}{c@{\rightarrow}cl}
AA  &AA  \\
AB  & 00 & \\
A0  & 00 & \\
Ap  & Ap & \forall p \in [1,k-2]\\
A(k-1)  & B(k-1) \\
\multicolumn{3}{l}{}\\
BA  & 00  \\
BB  & BB & \\
B0  & 00 & \\
Bp  & Bp & \forall p \in [1,k-1]\\
\multicolumn{3}{l}{}\\
\multicolumn{3}{l}{}\\
\end{array}
$
&
$
\begin{array}{c@{\rightarrow}cl}
0x & 0x & \forall x \in \{A,B,0,1\}\\
0p & (p+1)(k-1) & \forall p \in [2,k-2]\\
0(k-1) & B(k-1) \\
\multicolumn{3}{l}{}\\
1x & 1A & \forall x \in \{A,B,0\}\\
1p & (p+1)A & \forall p \in [1,k-2]\\ 
1(k-1) & BA \\
\multicolumn{3}{l}{}\\
\multicolumn{3}{l}{\forall p \in [2,k-1]}\\
px & px & \forall x \in \{A,B\} \cup [0,p-1]\\
pn & (n+1)(p-1) & \forall n \in [p,k-2]\\
p(k-1) & B(p-1) \\
\end{array}
$
\\
\\
\end{tabular}
}}

Note that in the initial configuration the sum of the weights of all
agents is exactly $\Sigma a_i x_i \mod k$. Note that the application of
any rule of our protocol does not modify this value, i.e., at any step
of the execution, the sum of the weights of all agents is exactly
$\Sigma a_i x_i \mod k$.

Note that the stable configurations, (i.e., the configurations where
no rule can be applied to modify the state of some agent), are the
following:
\begin{itemize}
\item all agents are in state $0$.
\item there exists a unique agent in state $p \in
  [1,k-2]$ and all other agents are in state $A$.
\item there exists a unique agent in state $p \in
  [2,k-1]$ and all other agents are in state $B$.
\end{itemize}

In any stable configuration, either all agents output $1$ (if $\Sigma
a_ix_i \equiv b \mod k$), or all agents output $0$. We now show that from
any reachable configuration, we can reach a stable configuration.

As in the previous case, as long as there are two agents in states
$p,p' \in [1,k-1]$ with $p \leq p'$, we can apply iteratively the rule
$pp' \rightarrow (p'+1)(p-1)$ between these two agents. Doing so,
either one agent enters state $1$, or one agent is in state $k-1$. If
one agent is in state $1$ while the other is in state $k-1$, we apply
the rule $1(k-1)\rightarrow BA$, and we have decreased the number of
agents with a positive weight by $2$. If there is one agent in state
$1$ (resp. $k-1$) while the other is in state $p \in [1,k-2]$
(resp. $p \in [2,k-1])$, we apply the rule $1p \rightarrow (p+1)A$
(resp. $p(k-1) \rightarrow B(p-1)$) to decrease the number of agents
with a positive weight.  Thus, we can always reach a configuration
where there is at most one agent $a$ in state $p \in [1,k-1]$, while
all the other agents are in state $0$, $A$ or $B$.

If $\Sigma a_ix_i \equiv 0 \mod k$, either all agents have started in
state $0$ or not. In the first case, the initial configuration was a
stable configuration, and we have nothing to prove. In the second
case, the last step before we reach a configuration containing only
agents in state $0$, $A$ or $B$, there was one agent in state $1$, one
agent in state $(k-1)$ and the rule $1(k-1) \rightarrow BA$ has been
applied. Thus, either there exists an agent in state $0$ in the
configuration, or there exists an agent in state $A$ and an agent in
state $B$. In the latter case, we can apply the rule $AB \rightarrow
00$ to be in a configuration containing agents in state $0$. Then,
applying the rules $A0 \rightarrow 00$ and $B0 \rightarrow 00$, we
reach a stable configuration where all agents are in state $0$. 

If $\Sigma a_ix_i \equiv 1 \mod k$ (resp. $\Sigma a_ix_i \equiv k-1
\mod k$), we reach a configuration where there is exactly one agent in
state $1$ (resp. $k-1$) while all the other agents are in state $0$,
$A$ or $B$. Then applying the rules $1B \rightarrow 1A$ and $10
\rightarrow 1B$ (resp. $A(k-1) \rightarrow B(k-1)$ and $0(k-1)
\rightarrow B(k-1)$), we reach a stable configuration where there is
exactly one agent in state $1$ (resp. $k-1$) while all the other
agents are in state $A$ (resp. $B$).

If $\Sigma a_ix_i \equiv p \mod k$ with $p \in [2,k-2]$, we reach a
configuration where there is exactly one agent in state $p$ while all
the other agents are in state $0$, $A$ or $B$.  If the agents with
weight $0$ are either all in state $A$, or all in state $B$, we are in
a stable configuration. Otherwise, applying as long as possible the
rules $AB \rightarrow 00$, $0p \rightarrow (p+1)(k-1)$ and
$(p+1)(k-1)\rightarrow Bp$, we reach a final configuration where
exactly one agent is in state $p$ while all the other agents are in
state $B$.

\end{document}